\numberwithin{equation}{section}
\newtheorem{theorem}{Theorem}[section]
\newtheorem{corollary}[theorem]{Corollary}
\title{ A LEAST SQUARES APPROACH FOR STABLE PHASE RETRIEVAL FROM SHORT-TIME FOURIER TRANSFORM MAGNITUDE}
\name{Tamir Bendory and Yonina C. Eldar \thanks{This work was funded by the European Union’s Horizon 2020 research
and innovation programme under grant agreement ERC-BNYQ, by the Israel Science Foundation under Grant no. 335/14, and by ICore: the Israeli Excellence Center ‘Circle of Light’.}}
\address{Department of Electrical Engineering, Technion -– Israel Institute of Technology, Haifa, Israel.}
\begin{document}
\maketitle

\begin{abstract}
We address the problem of recovering a signal (up to global phase)  from its short-time Fourier transform (STFT) magnitude measurements. This problem arises in several applications, including optical imaging and speech processing. In this paper we suggest three interrelated algorithms. The first algorithm estimates the signal efficiently from noisy measurements by solving a simple least-squares (LS) problem. In contrast to previously proposed algorithms, the LS approach has stability guarantees and does not require any prior knowledge on the sought signal. However, the recovery is guaranteed under relatively strong restrictions on the STFT window. The second approach is guaranteed to recover a non-vanishing signal efficiently from noise-free measurements, under very moderate conditions on the STFT window. Finally, the third method estimates the signal robustly from noisy measurements by solving a  semi-definite program (SDP). The proposed SDP algorithm contains an inherent trade-off between its robustness and the restrictions on the STFT windows that can be used.
\end{abstract}

\begin{keywords}
phase retrieval, short-time Fourier transform, least squares estimation, semi-definite program
\end{keywords}
\section{Introduction} \label{sec:intro}
The problem of recovering a signal from its Fourier transform magnitude arises in many areas in engineering and science, such as optics, X-ray crystallography, speech recognition and blind channel estimation \cite{harrison1993phase,walther1963question,millane1990phase,juang1993fundamentals,baykal2004blind}. 
This problem is called \emph{phase retrieval} and received  considerable attention recently, partly due to its strong connections with the fields of compressed sensing and sparse recovery \cite{candes2015phase,candes2015Wirtinger,shechtman2014gespar}. 
We refer the reader to a contemporary survey of the phase retrieval problem in \cite{shechtman2014phase}. 

Phase retrieval is inherently an ill-posed problem. Two main approaches have been suggested to overcome the ill-posedness. The first  builds upon prior knowledge on the signal's support, such as  sparsity or exact knowledge of a portion of the underlying signal \cite{gerchberg1972practical,fienup1982phase,shechtman2011sparsity,ranieri2013phase,jaganathan2013sparse}.   The second approach makes use of additional  measurements. This can be performed by structured illuminations and masks \cite{candes2014phase,gross2014improved} or by measuring the magnitude of the short-time Fourier transform (STFT). In \cite{eldar2015sparse}, it was demonstrated that for the same number of measurements, measuring the STFT magnitude can lead to better performance than measuring an oversampled discrete Fourier transform (DFT).

In this paper we consider the  recovery of a signal from its STFT magnitude. This problem arises in several applications in optics  and speech processing \cite{nawab1983signal,griffin1984signal,trebino2002frequency}. In \cite{eldar2015sparse}, the authors prove that non-vanishing signals (namely, signals whose entries are all non-zero) can be recovered by an algebraic method, under mild  conditions on the STFT window. However, the algorithm is highly sensitive to noise.
In \cite{jaganathan2015stft}, the authors prove that most non-vanishing one-dimensional signals can be recovered by a semi-definite program (SDP) under mild conditions on the STFT window. They also
 demonstrate by numerical experiments that the algorithm is robust to noise. Two additional algorithms that often work well in practice are the iterative Griffin-Lim algorithm (GLA) \cite{griffin1984signal} and STFT-GESPAR (for sparse signals) \cite{shechtman2014gespar,eldar2015sparse}, however neither have recovery guarantees.

In this paper, we suggest three algorithms for the recovery of a signal from its STFT magnitude together with recovery guarantees. All of the proposed techniques are based on observing the DFT of the STFT magnitude. Algorithm \ref{alg:LS} is based on a simple least-squares (LS) approach. In contrast to previous works, the LS algorithm has stability guarantees. Furthermore, it can be implemented efficiently and does not require any prior knowledge on the sought signal, whereas previous methods require the knowledge of a portion of the  signal \cite{nawab1983signal} or that the signal is non-vanishing \cite{eldar2015sparse,jaganathan2015stft}. However, these appealing properties hold under certain restrictions on the STFT windows. 

To alleviate the constraints on the window, we suggest an efficient algebraic technique, Algorithm \ref{alg}, to recover non-vanishing signals. The algorithm shares similarities with  the method suggested in \cite{eldar2015sparse}, however it has  slightly less restrictions on the STFT window. Both Algorithms \ref{alg:LS} and \ref{alg} can be extended naturally to higher dimensions. 

Our final approach exploits a larger set of measurements in an algorithmically more complicated way using an SDP, to robustly recover  arbitrary signals. Our SDP formulation imposes different constraints than the one analyzed in \cite{jaganathan2015stft}. Simulations show that the algorithm performs well in noisy environments. It contains the flexibility to improve the signal's estimation (by relying on more observations) at the cost of more restrictions on the STFT window. 

The rest of the paper is organized as follows. Section \ref{sec:problem_formualtion} formulates the problem of recovering a signal from its STFT magnitude. Section \ref{sec:pr} presents our  algorithms, proves some of their  properties and discusses  connections with previous works. Section \ref{sec:numerics} shows numerical simulations, validating our theoretical findings.  Section \ref{sec:conclusion} concludes the work and  suggests future research directions. 
 
\section{Problem Formulation} \label{sec:problem_formualtion}

The  STFT of a signal  $\mathbf{x}\in\mathbb{C}^N$ is defined as 
\begin{equation} \label{eq:stft}
\mathbf{X}[m,k]=\sum_{n=0}^{N-1}\mathbf{x}[n] \mathbf{g}[mL-n]e^{-2\pi j kn/N},
\end{equation} 
where $k=0,\dots,N-1$, $m=0,\dots,\left \lceil 
\frac{N}{L} \right\rceil-1$,   
$\mathbf{g}[n]$ is the STFT window and $L$ determines the separation in time between adjacent sections.  In the sequel, we assume  that $\mathbf{x}$ and $\mathbf{g}$
are periodically extended over the boundaries in (\ref{eq:stft}). The STFT can be interpreted as the Fourier transform of the signal $\mathbf{x}$,  multiplied by a sliding window $\mathbf{g}$. 
In this work we consider $L=1$. The analysis  of $L>1$ will be left for future work.

%

We assume that the information we have on the signal is the magnitude of its STFT  
\begin{equation} \label{eq:meas}
\mathbf{{Y}}[m,k]=\left\vert \mathbf{X}[m,k]\right\vert^2+\mathbf{\eta}[m,k],
\end{equation}
where $\mathbf{\eta}[m,k]$ is an additive noise. Our goal is to recover $\mathbf{x}$ given $\mathbf{Y}$. Clearly, any signal $\mathbf{z}$ and $\mathbf{z}e^{j\phi}$, for some angle $\phi$, have the same  STFT magnitude. Therefore, the global phase $\phi$ cannot be determined by any method, and we say that the recovery is \emph{up to global phase}.

In \cite{jaganathan2015stft}, an SDP algorithm was suggested to estimate a non-vanishing signal  $\mathbf{x}$ from $\mathbf{Y}$. However, the high computational complexity of the SDP problem prevents its practical use (see for instance \cite{candes2015Wirtinger}). Additionally, while the SDP algorithm of \cite{jaganathan2015stft} seems to perform well in a noisy environment, it does not have stability guarantees. In this paper we suggest a simple and efficient algorithm to recover an arbitrary signal based on a LS formulation with stability guarantees by observing the DFT of the STFT magnitude.

Before proceeding, we introduce some notation. Let $\bar{\mathbf{z}}$ and $\mathbf{z}^*$ be the conjugate and the transpose conjugate of a vector $\mathbf{z}$, respectively. We say that $\mathbf{z}$ is non-vanishing if $\mathbf{z}[n]\neq 0$ for all $n$.
Let $\mathbf{F}$ and  $\mathbf{F}^*$ be the DFT and inverse DFT matrices. The set $\mathcal{H}_N$ denotes  all Hermitian matrices of size $N \times N$, and  $\vert \Lambda\vert$ the cardinality of the set $\Lambda$. For $k=0,1,\dots,N-1$, we denote by diag$(\mathbf{X},k)\in\mathbb{C}^N$ the $k^{th}$ circular diagonal of a matrix $\mathbf{X}$, namely,  the entries $(i,(i+k)\mbox{mod}N),\thinspace i=0,\dots,N-1$ of  $\mathbf{X}$.


\section{Phase Retrieval from STFT Magnitude} \label{sec:pr} 

\subsection{Stable Recovery via Least-Squares Estimation}  \label{sec:ls}

We begin by considering the DFT of $\mathbf{{Y}}[m,k]$ with respect to the second variable:
\begin{equation} \label{eq:dft}
\mathbf{Z}[m,\ell]=\sum_{k=0}^{N-1}\mathbf{Y}[m,k]e^{-2\pi jk\ell/N}.
\end{equation}
In the noiseless case we get
\begin{equation*}
\begin{split}
&\mathbf{Z}[m,\ell]=\sum_{k=0}^{N-1}\left\vert \mathbf{X}[m,k]\right\vert^2e^{-2\pi jk\ell/N} \\
=&\sum_{n_1,n_2=0}^{N-1}\mathbf{x}\left[n_1\right]\overline{\mathbf{x}\left[n_2\right]}\mathbf{g}\left[m-n_1\right]\overline{\mathbf{g}\left[m-n_2\right]} \sum_{k=0}^{N-1}e^{-2\pi j (n_1-n_2+\ell)k/N} \\
=&N\sum_{n=0}^{N-1}\mathbf{x}\left[n\right]\overline{\mathbf{x}\left[n+\ell\right ] }\mathbf{g}\left[(m-n)\mbox{mod}N\right]\overline{\mathbf{g}\left[(m-n-\ell)\mbox{mod}N\right]},
\end{split}
\end{equation*}
where the last equality follows from the identity $\sum_{k=0}^{N-1}e^{-2\pi j kn/N}=N\delta[n\mbox{mod}N]$.
Consequently, for any fixed $\ell=0,\dots,N-1$, we obtain the linear system of equations 
\begin{equation} \label{eq:yl}
\frac{1}{N}\mathbf{z}_\ell=\mathbf{G}_\ell\mathbf{x}_\ell,
\end{equation}
where $\mathbf{z}_\ell:=\left\{ \mathbf{Z}[m,\ell]\right\}_{m=0}^{N-1}$, $\mathbf{x}_\ell:=\left\{ \mathbf{x}\left[n\right] \overline{\mathbf{x}\left[n+\ell\right ]}\right\}_{n=0}^{N-1}$ and $\mathbf{G}_\ell$ is an $N \times N$ circulant matrix with  first column  given by $ \mathbf{g}[m]\overline{\mathbf{g}[(m-\ell) \mbox{mod} N]} $. 

Relation (\ref{eq:yl}) will be the basis for the three algorithms we propose below. Let $\mathbf{X}=\mathbf{x}\mathbf{x}^*$ and observe that diag$(\mathbf{X},\ell)=\mathbf{x}_\ell$. Therefore, in the noiseless case, if the matrices $\mathbf{G}_\ell$ are invertible for all $\ell=0,\dots,N-1$, then we can compute  $\mathbf{x}_\ell=\frac{1}{N}\mathbf{G}_\ell^{-1}\mathbf{z}_\ell$. Since $\mathbf{G}_\ell$ is a circulant matrix, it can be diagonalized by $\mathbf{G}_\ell=\mathbf{F}^*\mathbf{\Sigma_\ell}\mathbf{F}$, where $\mathbf{\Sigma_\ell}$ is a diagonal matrix whose entries are given by the DFT of the first column of $\mathbf{G}_\ell$. Thus, $\mathbf{x}_\ell=\frac{1}{N}\mathbf{F}^*\mathbf{\Sigma_\ell}^{-1}\mathbf{F}\mathbf{z}_\ell$. This step can be computed efficiently using FFT.  The set of vectors $\mathbf{x}_\ell, \thinspace \ell=0,\dots,N-1$ determines the rank one matrix $\mathbf{X}$.

In the presence of noise, we aim to solve the problem 
\begin{equation*} 
\min_{\mathbf{X}\in\mathbb{C}^{N\times N}} \sum_{\ell=0}^{N-1}\Vert \mathbf{z}_\ell-\mathbf{G}_\ell\mbox{diag}(\mathbf{X},\ell)\Vert_2^2 \quad \mbox{subject to} \quad \mbox{rank}(\mathbf{X})=1. 
\end{equation*}
In our first approach, we first ignore the rank-one constraint and solve a LS problem. If the matrices $\mathbf{G}_\ell$ are invertible,  then the solution is  given by 
\begin{equation*}
\mbox{diag}(\mathbf{X},\ell)=\frac{1}{N}\mathbf{G}_\ell^{-1}\mathbf{z}_\ell=\frac{1}{N}\mathbf{F}^*\mathbf{\Sigma_\ell}^{-1}\mathbf{F}\mathbf{z}_\ell.
\end{equation*}
We then determine $\mathbf{x}$ by computing the best rank one approximation of  $\mathbf{X}$, which is given by the eigenvector of $\mathbf{X}$ associated with the largest eigenvalue. This approach is summarized in Algorithm \ref{alg:LS}.

In a noise-free environment, Algorithm \ref{alg:LS} results in exact recovery under the following conditions:
\begin{theorem} \label{th:LS}
Any complex signal can be recovered (up to global phase) from its STFT magnitude with $L=1$  by Algorithm \ref{alg:LS} if the DFT of $\mathbf{g}[m]\mathbf{g}[(m-\ell)\mbox{mod}N]$ is non-vanishing for all $\ell=0,\dots,N-1$.
\end{theorem}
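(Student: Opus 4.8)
The plan is to show that, in the noise-free case, the least-squares step of Algorithm~\ref{alg:LS} returns the exact matrix $\mathbf{X}=\mathbf{x}\mathbf{x}^*$, and that the subsequent rank-one extraction step then returns $\mathbf{x}$ up to a global phase.

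First I would identify the hypothesis on $\mathbf{g}$ with the invertibility of each circulant matrix $\mathbf{G}_\ell$, $\ell=0,\dots,N-1$. Writing $\mathbf{G}_\ell=\mathbf{F}^*\mathbf{\Sigma}_\ell\mathbf{F}$ as in the text, the diagonal of $\mathbf{\Sigma}_\ell$ is exactly the DFT of the first column of $\mathbf{G}_\ell$, namely of $\mathbf{g}[m]\overline{\mathbf{g}[(m-\ell)\bmod N]}$, and this vector being non-vanishing is precisely the statement that $\mathbf{G}_\ell$ has trivial kernel. Consequently, for each fixed $\ell$, the square linear system (\ref{eq:yl}) has the unique solution $\mathrm{diag}(\mathbf{X},\ell)=\tfrac1N\mathbf{G}_\ell^{-1}\mathbf{z}_\ell=\tfrac1N\mathbf{F}^*\mathbf{\Sigma}_\ell^{-1}\mathbf{F}\mathbf{z}_\ell$; by the derivation preceding (\ref{eq:yl}) this coincides with the true value $\mathbf{x}_\ell=\{\mathbf{x}[n]\overline{\mathbf{x}[n+\ell]}\}_n$, and it is also the minimizer of the $\ell$-th summand of the least-squares objective, whose minimal value is $0$. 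Since that objective decouples over $\ell$, the least-squares step recovers every circular diagonal $\mathrm{diag}(\mathbf{X},\ell)$, $\ell=0,\dots,N-1$.

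Next I would use that the circular diagonals $\mathrm{diag}(\cdot,\ell)$, $\ell=0,\dots,N-1$, partition the $N^2$ entries of an $N\times N$ matrix, so knowing all of them determines the matrix; here this yields $\mathbf{X}=\mathbf{x}\mathbf{x}^*$ exactly. If $\mathbf{x}=\mathbf{0}$ the claim is trivial, so assume $\mathbf{x}\neq\mathbf{0}$. Then $\mathbf{X}=\mathbf{x}\mathbf{x}^*$ is Hermitian, positive semidefinite, and of rank one, with largest eigenvalue $\Vert\mathbf{x}\Vert_2^2$ and associated unit eigenvector $\mathbf{x}/\Vert\mathbf{x}\Vert_2$, which is unique up to a unimodular factor. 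Hence the best rank-one approximation step of Algorithm~\ref{alg:LS} returns $e^{j\phi}\mathbf{x}$ for some $\phi$, i.e.\ $\mathbf{x}$ up to global phase, which is the strongest conclusion possible in view of the ambiguity noted after (\ref{eq:meas}).

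As for the difficulty: there is no genuine obstacle here. The two ingredients are (i) the circulant diagonalization, which turns the window condition into invertibility of all $\mathbf{G}_\ell$, and (ii) the elementary observation that the per-$\ell$ least-squares subproblems decouple and are solved with zero residual in the noiseless regime. The only point needing a word of care is that the leading eigenvector is pinned down only up to a unimodular scalar; but since this is exactly the intrinsic global-phase ambiguity of STFT phase retrieval, it entails no loss in the statement.
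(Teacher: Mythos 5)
Your proposal is correct and follows essentially the same route as the paper's (much terser) proof: the non-vanishing DFT condition gives invertibility of each circulant $\mathbf{G}_\ell$ via $\mathbf{G}_\ell=\mathbf{F}^*\mathbf{\Sigma}_\ell\mathbf{F}$, so the noiseless LS step returns every circular diagonal of $\mathbf{x}\mathbf{x}^*$ exactly, and the leading-eigenvector step recovers $\mathbf{x}$ up to global phase. You merely spell out details the paper leaves implicit (decoupling over $\ell$, assembly of $\mathbf{X}$ from its diagonals, and the rank-one extraction), which is fine.
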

\begin{proof}
In the noiseless case, Algorithm \ref{alg:LS} recovers the $\mathbf{x}$ (up to global phase) if  $\mathbf{G}_\ell$ is invertible for all $\ell=0,\dots,N-1$. Since $\mathbf{G}_\ell$ is a circulant matrix, it is invertible if the DFT of its first column, given by $\mathbf{g}[m]\mathbf{g}[(m-\ell)\mbox{mod}N]$, is non-vanishing.
\end{proof}

In contrast to the algorithms suggested in \cite{eldar2015sparse,jaganathan2015stft}, Algorithm \ref{alg:LS} is guaranteed to achieve a stable estimation in the presence of noise as the solution of a LS problem as long as the matrices $\mathbf{G}_\ell$ are invertible, without any prior knowledge on the sought signal.  The algorithm is efficient due to the circular structure of the matrices $\mathbf{G}_\ell$ that can be inverted by FFT. 

The main drawback of Algorithm \ref{alg:LS} is the requirement for invertibility of $\mathbf{G}_\ell$ for all $\ell=0,1,\dots,N-1$. A common choice of the STFT window is a rectangular window $\mathbf{g}[n]=\mathbf{1}_{[0,W-1]}$, with ones in the interval $[0,W-1]$ and zero elsewhere. Conditions for invertibility in this case are given by the following corollary: 
\begin{corollary} \label{cor:ls}
Algorithm  \ref{alg:LS} recovers $\mathbf{x}$ (up to global phase) from its STFT magnitude with a rectangular window  $\mathbf{1}_{[0,W-1]}$ and $L=1$, if:
\begin{enumerate}
\item  $N/2<W<N$.
\item $N$ and the set $\{2W-N,\dots, W\}$ are coprime numbers. This always holds if $N$ is a prime number.
\end{enumerate}  
\end{corollary}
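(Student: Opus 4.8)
The plan is to deduce Corollary~\ref{cor:ls} from Theorem~\ref{th:LS} by reducing it to an elementary fact about Dirichlet kernels. By Theorem~\ref{th:LS} it suffices to show that, for every $\ell=0,\dots,N-1$, the DFT of $h_\ell[m]:=\mathbf{g}[m]\,\mathbf{g}[(m-\ell)\bmod N]$ is non-vanishing, where $\mathbf{g}=\mathbf{1}_{[0,W-1]}$. The key observation is that $h_\ell$ is the indicator of the intersection $A\cap B_\ell$ of the two length-$W$ arcs $A=\{0,\dots,W-1\}$ and $B_\ell=\{\ell,\dots,\ell+W-1\}$ (taken modulo $N$) in $\mathbb{Z}/N\mathbb{Z}$; by Condition~1 we have $2W>N$, so the two arcs always overlap and $|A\cap B_\ell|\ge 2W-N\ge 1$. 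A one-line computation shows that $h_{N-\ell}$ is a cyclic shift of $h_\ell$, and cyclic shifts only multiply the DFT by unimodular factors, so I may assume $0\le\ell\le\lfloor N/2\rfloor$ (one checks $N-W\le\lfloor N/2\rfloor$, so the two ranges below exhaust this). I would also record at the outset that Condition~2 forces $N$ to be odd: since $W<N$ the block $\{2W-N,\dots,W\}$ contains at least two consecutive integers, hence an even one, which cannot be coprime to an even $N$; moreover, when $N$ is prime we have $1\le 2W-N$ and $W\le N-1$, so the block lies inside $\{1,\dots,N-1\}$ and is automatically coprime to $N$, which disposes of the parenthetical claim.

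In the single-arc range $0\le\ell\le N-W$, we have $\ell\le N-W<W$ (again Condition~1), so $B_\ell$ does not wrap and $A\cap B_\ell=\{\ell,\dots,W-1\}$ is one arc of length $W-\ell$. Hence $\widehat{h_\ell}$ is, up to a unimodular factor, the Dirichlet kernel $\sum_{j=0}^{W-\ell-1}e^{-2\pi i jk/N}$, which equals $W-\ell>0$ at $k=0$ and $\tfrac{1-e^{-2\pi i(W-\ell)k/N}}{1-e^{-2\pi i k/N}}$ otherwise; the latter vanishes for some $k\in\{1,\dots,N-1\}$ exactly when $\gcd(W-\ell,N)>1$. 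As $\ell$ runs over $\{0,1,\dots,N-W\}$, the number $W-\ell$ runs over $\{2W-N,\dots,W\}$, so Condition~2 is precisely the statement that this never occurs.

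The wrap-around range $N-W<\ell\le\lfloor N/2\rfloor$ is more delicate. Here $\ell+W>N$, so $B_\ell$ wraps, and since $\ell\le\lfloor N/2\rfloor<W$ the set $A\cap B_\ell$ splits into the two disjoint arcs $\{0,\dots,\ell+W-N-1\}$ and $\{\ell,\dots,W-1\}$, of total length $2W-N$. Summing the two geometric series and simplifying with $e^{-2\pi i(\ell+W-N)k/N}=e^{-2\pi i(\ell+W)k/N}$, the numerator telescopes into a product, giving for $k\neq 0$
\[
\widehat{h_\ell}[k]=\frac{\bigl(1+e^{-2\pi i\ell k/N}\bigr)\bigl(1-e^{-2\pi i Wk/N}\bigr)}{1-e^{-2\pi i k/N}},
\]
while $\widehat{h_\ell}[0]=|A\cap B_\ell|=2W-N>0$. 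For $k\neq 0$ the denominator is nonzero; the factor $1-e^{-2\pi i Wk/N}$ is nonzero because $\gcd(W,N)=1$ (part of Condition~2) forces $Wk\not\equiv 0\pmod N$; and the factor $1+e^{-2\pi i\ell k/N}$ is nonzero because $N$ is odd — equality to $-1$ would force $N\mid 2\ell k$ with $2\ell k/N$ odd, impossible for odd $N$. Hence every $\mathbf{G}_\ell$ is invertible and Theorem~\ref{th:LS} yields the claim.

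I expect the wrap-around range to be the only real obstacle: one must spot the two-arc description of $A\cap B_\ell$, carry out the telescoping that collapses $\widehat{h_\ell}[k]$ into the clean factored form above, and notice that the spurious factor $1+e^{-2\pi i\ell k/N}$ — which is exactly what makes $\mathbf{G}_{N/2}$ singular when $N$ is even — is harmless here only because Conditions~1 and 2 together exclude even $N$. Everything else is the routine ``a circulant is invertible iff the DFT of its first column has no zero'' argument already used for Theorem~\ref{th:LS}.
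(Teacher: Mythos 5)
Your proof is correct, and although it starts from the same reduction as the paper --- invoke Theorem~\ref{th:LS} and check that the DFT of $h_\ell[m]=\mathbf{g}[m]\mathbf{g}[(m-\ell)\bmod N]$ has no zeros --- it diverges exactly where it matters. The paper's own proof asserts that for $N/2<W<N$ the product window is a rectangular window of length between $2W-N$ and $W$ for \emph{every} $\ell$, so that each $\widehat{h_\ell}$ is a modulated Dirichlet kernel, and then concludes from the coprimality condition. That assertion is literally true only in your single-arc range ($0\le\ell\le N-W$, or equivalently $\ell\ge W$ after your shift symmetry): for $N-W<\ell<W$ the support of $h_\ell$ is the union of the two disjoint arcs $\{0,\dots,\ell+W-N-1\}$ and $\{\ell,\dots,W-1\}$, which is not a cyclic interval, so its DFT is not a shifted Dirichlet kernel. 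Your handling of precisely this range is the genuinely new content relative to the paper: the factorization $\widehat{h_\ell}[k]=\bigl(1+q^{\ell}\bigr)\bigl(1-q^{W}\bigr)/(1-q)$ with $q=e^{-2\pi i k/N}$, the use of $\gcd(W,N)=1$ (contained in Condition~2), and the observation that Condition~2 forces $N$ to be odd so that the factor $1+q^{\ell}$ cannot vanish. In short, your route buys rigor where the paper's proof is only heuristic, at the modest cost of a case split and the reduction to $\ell\le\lfloor N/2\rfloor$; the paper buys brevity by treating all shifts as if they produced a single rectangular window, which is exactly the step your two-arc analysis repairs. Your aside that an even $N$ (e.g.\ $\ell=N/2$) is what would make $\mathbf{G}_\ell$ singular is a good sanity check that the factor $1+q^{\ell}$ is not an artifact of the computation.
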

\begin{proof}
If $W=N$, then the DFT of $\mathbf{g}[m]\mathbf{g}[(m-\ell)\mbox{mod}N]$ is a Delta function for all $\ell$ and thus vanishes. 
If $W\leq N/2$ then there exists $\ell$ for which $\mathbf{g}[m]\mathbf{g}[(m-\ell)\mbox{mod}N]=0$ and thus $\mathbf{G}_\ell$ is not invertible for all $\ell$. If $W>N/2$ then  $\mathbf{g}[m]\mathbf{g}[(m-\ell)\mbox{mod}N]$ is a rectangular window of length ranging between $2W-N$ and $W$ (as a function of $\ell)$. Therefore, the DFT of the first column of $\mathbf{G}_\ell$ is a Dirichlet kernel (modulated by an exponential) with width ranging between $2W-N$ and $W$. If $N$ and all ${2W-N,\dots,W}$ are coprime then the Dirichlet kernel will not vanish. Clearly, if $N$ is a prime number then this statement holds.  
\end{proof}

\begin{algorithm} 
\caption{Signal Recovery from its noisy STFT magnitude using LS}
\begin{algorithmic}
\item \textbf{Input:}   The noisy magnitude of the signal STFT (\ref{eq:meas}). 
\item \textbf{Output:} Estimation of the sought signal $\hat{\mathbf{x}}$. 
\begin{enumerate}
\item Compute $\mathbf{Z}[m,\ell]$ according to (\ref{eq:dft}).
\item  For each $\ell=0,\dots,N-1$, compute $\mathbf{x}_\ell=\frac{1}{N}\mathbf{F}^*\mathbf{\Sigma}_\ell^{-1}\mathbf{F} \mathbf{z}_\ell$, where $\mathbf{\Sigma}_\ell$ is a diagonal matrix with the entries $\mathbf{F}\left(\mathbf{g}[n]\mathbf{g}[(n-\ell)\mbox{mod}N]\right)$.
\item Construct a matrix $\mathbf{X}\in\mathbb{C}^{N \times N}$ by  $\mbox{diag}(\mathbf{X},\ell)=\mathbf{x}_\ell$ for all $\ell=0,\dots,N-1$.
\item  Compute  $\hat{\mathbf{x}}=\sqrt{\lambda_{\max}}u_{\max}$, where $\lambda_{\max}$ is the maximal eigenvalue of $\mathbf{X}$ and $u_{\max}$ is the associated eigenvector.    
\end{enumerate}
\end{algorithmic}
 \label{alg:LS}

\end{algorithm}

\subsection{ Algebriac Recovery in a Noise-Free Environment}\label{sec:noisefree}
To reduce the requirements on the window, suppose that the information we have is exactly $\left\vert \mathbf{X}[m,k]\right\vert^2$ without any additional noise. 
Theorem \ref{th:main_algebriac} states that one can recover efficiently any non-vanishing signal $\mathbf{x}\in\mathbb{C}^N$ from its STFT magnitude via Algorithm \ref{alg} if the  STFT window $\mathbf{g}$ satisfies  moderate conditions.  Compared to Theorem 1 in \cite{eldar2015sparse}, Theorem \ref{th:main_algebriac} does not have explicit restrictions on the size of the window, however it has an additional requirement that the DFT of $\mathbf{g}[m]\overline{\mathbf{g}[(m-\ell)\mbox{mod}N])}$ for $\ell=0,1,$ is non-vanishing. 
Theorem 4.2 in \cite{jaganathan2015stft} has similar recovery guarantees, however it requires solving an SDP, which has high computational complexity.

\begin{algorithm} 
\caption{Signal Recovery from its noise-free STFT magnitude}
\begin{algorithmic}
\item \textbf{Input:}  The magnitude of the signal STFT $\left\vert \mathbf{X}[m,k]\right\vert^2$. 
\item \textbf{Output:} Estimate of the sought signal $\hat{\mathbf{x}}$. 
\begin{enumerate}
\item Compute $\mathbf{Z}[m,\ell]$ according to (\ref{eq:dft}).
\item  For $\ell=0,1,$ compute $\mathbf{x}_\ell=\frac{1}{N}\mathbf{F}^*\mathbf{\Sigma}_\ell^{-1}\mathbf{F} \mathbf{z}_\ell$, where $\mathbf{\Sigma}_\ell$ is a diagonal matrix with entries $\mathbf{F}_N\left(\mathbf{g}[n]\overline{\mathbf{g}[(n-\ell)\mbox{mod}N])}\right)$  and set $\hat{\mathbf{x}}[0]=\sqrt{\mathbf{x}_0[0]}$.
\item Compute recursively $\overline{\hat{\mathbf{x}}[n+1]}=\frac{\mathbf{x}_1[n]}{\hat{\mathbf{x}}[n]}$.
\end{enumerate}
\end{algorithmic}
 \label{alg}

\end{algorithm}

\begin{theorem} \label{th:main_algebriac}
Any non-vanishing complex signal can be recovered (up to global phase) from $\left\vert \mathbf{X}[m,k]\right\vert^2$ with $L=1$  by Algorithm \ref{alg} if the DFT  of $\mathbf{g}[n]\overline{\mathbf{g}[(n-\ell)\mbox{mod}N]}$ is non-vanishing for  $\ell=0,1$.
\end{theorem}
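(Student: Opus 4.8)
The plan is to walk through the three steps of Algorithm~\ref{alg} and check that each one is well-defined and exact under the stated hypothesis. First I would reuse the invertibility argument from the proof of Theorem~\ref{th:LS}: the matrix $\mathbf{G}_\ell$ is circulant with first column $\mathbf{g}[n]\overline{\mathbf{g}[(n-\ell)\mbox{mod}N]}$, hence invertible exactly when the DFT of that column is non-vanishing. By assumption this holds for $\ell=0$ and $\ell=1$, so $\mathbf{\Sigma}_0$ and $\mathbf{\Sigma}_1$ have no zero entries and, from the noise-free identity (\ref{eq:yl}), Step~2 returns \emph{exactly} $\mathbf{x}_0=\mbox{diag}(\mathbf{X},0)$ and $\mathbf{x}_1=\mbox{diag}(\mathbf{X},1)$ with $\mathbf{X}=\mathbf{x}\mathbf{x}^*$; that is, $\mathbf{x}_0[n]=|\mathbf{x}[n]|^2$ and $\mathbf{x}_1[n]=\mathbf{x}[n]\overline{\mathbf{x}[n+1]}$, indices taken modulo $N$.

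Next I would pin down the global phase. Since $\mathbf{x}$ is non-vanishing, $\mathbf{x}[0]\neq 0$, so we may choose the representative in the phase-orbit of $\mathbf{x}$ for which $\mathbf{x}[0]$ is real and positive; then $\hat{\mathbf{x}}[0]=\sqrt{\mathbf{x}_0[0]}=|\mathbf{x}[0]|=\mathbf{x}[0]$, and note that the square root is of the nonnegative real number $|\mathbf{x}[0]|^2$, so it is well-defined. The core of the argument is then a one-line induction showing $\hat{\mathbf{x}}[n]=\mathbf{x}[n]$ for $n=0,\dots,N-1$: the base case is the previous sentence, and if $\hat{\mathbf{x}}[n]=\mathbf{x}[n]$ then Step~3 gives $\overline{\hat{\mathbf{x}}[n+1]}=\mathbf{x}_1[n]/\hat{\mathbf{x}}[n]=\mathbf{x}[n]\overline{\mathbf{x}[n+1]}/\mathbf{x}[n]=\overline{\mathbf{x}[n+1]}$, so $\hat{\mathbf{x}}[n+1]=\mathbf{x}[n+1]$; the division is legitimate precisely because $\hat{\mathbf{x}}[n]=\mathbf{x}[n]\neq 0$ by the non-vanishing hypothesis. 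Running the recursion for $n=0,\dots,N-2$ recovers all remaining entries, yielding $\hat{\mathbf{x}}=\mathbf{x}$ for the chosen representative, i.e.\ recovery up to global phase.

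There is no serious obstacle here; the only points demanding care are the global-phase normalization and the observation that non-vanishing of $\mathbf{x}$ is exactly what prevents a division by zero in the recursion — which is also the reason Algorithm~\ref{alg}, unlike Algorithm~\ref{alg:LS}, needs $\mathbf{G}_\ell$ invertible only for $\ell=0,1$ rather than for every $\ell$. For completeness one may also remark that the $n=N-1$ instance $\mathbf{x}_1[N-1]=\mathbf{x}[N-1]\overline{\mathbf{x}[0]}$ arising from the periodic extension is automatically consistent with the recovered vector, though it plays no role in the reconstruction itself.
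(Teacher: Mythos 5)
Your proposal is correct and follows essentially the same route as the paper's proof: invert the circulant systems for $\ell=0,1$ to obtain $|\mathbf{x}[n]|^2$ and $\mathbf{x}[n]\overline{\mathbf{x}[n+1]}$, fix the global phase via $\hat{\mathbf{x}}[0]=\sqrt{\mathbf{x}_0[0]}$, and recover the remaining entries by the recursion, with the non-vanishing hypothesis guaranteeing no division by zero. Your added details (the explicit induction and the remark on the wrap-around consistency) merely flesh out what the paper leaves implicit.
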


\begin{proof} 
According to (\ref{eq:yl}), if $\mathbf{G}_\ell$ is invertible then we get the linear system of equations 
\begin{equation*}
\frac{1}{N}\mathbf{z}_\ell=\mathbf{G}_\ell\mathbf{x}_\ell \Leftrightarrow  \mathbf{x}_\ell=\frac{1}{N}\mathbf{F}^*\mathbf{\Sigma}_\ell^{-1}\mathbf{F}\mathbf{z}_\ell,
\end{equation*}
where ${\Sigma}_\ell$ is a diagonal matrix with entries  $F_N\left(\mathbf{g}[n]\overline{\mathbf{g}[(n-\ell) \mbox{mod} N]}\right)$.
By assumption, the DFT of $ \vert \mathbf{g}[n]\vert ^2$  is not equal to zero and thus we can compute $\mathbf{x}_0[n]=\vert \mathbf{x}[n]\vert^2$. If the signal is known to be non-negative, then the algorithm is completed.
By further assumption,  we can compute $\mathbf{x}_1[n]=\mathbf{x}[n]\overline{\mathbf{x}[n+1]}$. Since we cannot recover the global phase, we arbitrary set $\mathbf{x}[0]=\vert \mathbf{x}[0]\vert$ which is known and equal to $\sqrt{\mathbf{x}_0[0]}$. Now, we can compute the rest of the entries recursively according to step 3.
\end{proof}
%

Algorithm \ref{alg} alleviates the requirements of the Theorem \ref{th:LS} on the window. The following corollary demonstrates this  on a rectangular window. The proof is similar to the proof of Corollary \ref{cor:ls} and thus omitted.   
\begin{corollary} \label{cor}
Algorithm  \ref{alg} recovers a non-vanishing $\mathbf{x}$ (up to global phase) with a rectangular window  $\mathbf{g}[n]=\mathbf{1}_{[0,W-1]}$ and $L=1$, if:
\begin{enumerate}
\item  $2\leq W\leq N-1$.
\item $N$ and $W-\ell$ are coprime for $\ell=0,1$.
\end{enumerate}  
\end{corollary}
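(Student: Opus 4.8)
The plan is to deduce the corollary directly from Theorem~\ref{th:main_algebriac}, which only asks that the DFT of $h_\ell[n] := \mathbf{g}[n]\,\overline{\mathbf{g}[(n-\ell)\bmod N]}$ be non-vanishing for $\ell = 0$ and $\ell = 1$; so it is enough to check this for $\mathbf{g} = \mathbf{1}_{[0,W-1]}$ under hypotheses (1)--(2). The single elementary fact I would invoke is that the DFT of a length-$r$ rectangular window supported on an interval $[a, a+r-1]$ equals $e^{-2\pi j a k/N} D_r[k]$, where $D_r[k] = \sum_{m=0}^{r-1} e^{-2\pi j m k/N}$ is the Dirichlet kernel, and that $D_r[k] = \dfrac{1 - e^{-2\pi j r k/N}}{1 - e^{-2\pi j k/N}}$ (for $k \neq 0$) vanishes exactly when $N \mid rk$ while $N \nmid k$. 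Such a $k \in \{1,\dots,N-1\}$ exists if and only if $\gcd(r,N) > 1$; hence the DFT of a nondegenerate shifted rectangular window of length $r$ is non-vanishing iff $\gcd(r,N)=1$, since the modulating factor $e^{-2\pi j a k/N}$ is unimodular and irrelevant to vanishing.

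For $\ell = 0$ one has $h_0[n] = |\mathbf{g}[n]|^2 = \mathbf{1}_{[0,W-1]}[n]$, a rectangular window of length $W$; the constraint $W \le N-1$ rules out the degenerate all-ones window (whose DFT is a scaled delta), and by the fact above the DFT of $h_0$ is non-vanishing precisely when $\gcd(W,N)=1$, i.e. hypothesis (2) for $\ell=0$. For $\ell = 1$, I would first compute the support of $h_1[n] = \mathbf{g}[n]\,\overline{\mathbf{g}[(n-1)\bmod N]}$: it is $\{0,\dots,W-1\}\cap\{\,n : (n-1)\bmod N \in \{0,\dots,W-1\}\,\}$, and because $W-1 < N-1$ the wraparound index $n=0$ (for which $(n-1)\bmod N = N-1$) is excluded, so the support is just the interval $\{1,\dots,W-1\}$, nonempty thanks to $W \ge 2$. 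Thus $h_1$ is a shifted rectangular window of length $W-1$, whose DFT is non-vanishing iff $\gcd(W-1,N)=1$, i.e. hypothesis (2) for $\ell=1$. Invoking Theorem~\ref{th:main_algebriac} with these two observations finishes the argument.

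I do not anticipate a real difficulty: this is the same Dirichlet-kernel computation that underlies Corollary~\ref{cor:ls}, but now only the two diagonals $\ell = 0, 1$ required by Algorithm~\ref{alg} need to be controlled, which is why the window restrictions relax from ``$N/2 < W < N$'' to ``$2 \le W \le N-1$''. The only mildly delicate points are the support bookkeeping for $h_1$ --- where $W \le N-1$ is precisely what excludes the wraparound term and $W \ge 2$ is what keeps the window nondegenerate --- both of which are built into hypothesis (1).
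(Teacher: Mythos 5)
Your argument is correct and follows exactly the route the paper intends: the paper omits the proof of Corollary~\ref{cor}, noting it is ``similar to the proof of Corollary~\ref{cor:ls},'' i.e.\ the same Dirichlet-kernel/coprimality computation you carry out, restricted to the two diagonals $\ell=0,1$ needed by Theorem~\ref{th:main_algebriac}. Your explicit support bookkeeping for $h_1$ (showing the wraparound term at $n=0$ is excluded when $W\le N-1$, and nondegeneracy when $W\ge 2$) correctly fills in the details the paper leaves implicit.
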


%


\subsection{Extension to High-Dimensional Signals} \label{sec:2d}
We briefly describe the extension of Algorithms \ref{alg:LS} and \ref{alg} to bivariate signals. The algorithms can be extended to higher dimensions by the same methodology.  The two-dimensional (2D) STFT with respect to a bivariate window $\mathbf{g}$ is given by (for $L=1$)
\begin{equation*}
\begin{split}
\mathbf{X}\left[ m_1,m_2 ; k_1,k_2 \right]=&\sum_{n_1,n_2=0}^{N-1}\mathbf{x}\left[ n_1,n_2\right]\mathbf{g}\left[ m_1-n_1,m_2-n_2\right]\\
&e^{-2\pi j \left( k_1n_1+k_2n_2\right)/N}.
\end{split}
\end{equation*}
In the same manner as in (\ref{eq:dft}), we take the 2D DFT of $\vert \mathbf{X}\left[ m_1,m_2 ; k_1,k_2 \right]\vert^2 $ with respect to $k_1$ and $k_2$ and get, \footnote{All the indices in this section should be interpreted as modolu $N$.}
\begin{equation*}
\begin{split}
&\frac{1}{N^2}\mathbf{Z}[m_1,m_2 ; \ell_1,\ell_2] \\ &=\frac{1}{N^2}\sum_{k_1,k_2=0}^{N-1}\vert \mathbf{X}\left[ m_1,m_2 ; k_1,k_2 \right]\vert ^2 e^{-2\pi j \left(k_1\ell_1+k_2\ell_2 \right)/N} \\
&=\sum_{n_1,n_2=0}^{N-1}\mathbf{x}\left[n_1,n_2 \right]\overline{\mathbf{x}\left[n_1+\ell_1,n_2+\ell_2 \right]} \\
& \mathbf{g}\left[ m_1-n_1,m_2-n_2\right]\overline{\mathbf{g}\left[ m_1-n_1-\ell_1,m_2-n_2-\ell_2\right]}.
\end{split}
\end{equation*}
For a fixed $(\ell_1,\ell_2)$, we obtain the linear system of equations
\begin{equation}
\begin{split}
&\frac{1}{N^2}\mathbf{Z}[m_1,m_2 ; \ell_1,\ell_2] =\mathbf{G}_{\ell_1,\ell_2}\mathbf{x_{\ell_1,\ell_2}}, 
\end{split}
\end{equation}
where $\mathbf{x_{\ell_1,\ell_2}}=\mathbf{x}\left[n_1,n_2 \right]\overline{\mathbf{x}\left[n_1+\ell_1,n_2+\ell_2 \right]}$ and $\mathbf{G}_{\ell_1,\ell_2}$ is a block circulant matrix  with circulant blocks. Such a matrix is diagonalized by the 2D DFT matrix and the eigenvalues are given by the 2D DFT of the first column of $\mathbf{G}_{\ell_1,\ell_2}$. Therefore, if the 2D DFT of the first column of $\mathbf{G}_{\ell_1,\ell_2}$ is non-vanishing then the system is invertible and one can recover  $\mathbf{x_{\ell_1,\ell_2}}$ efficiently. The rest of the procedure is a simple generalization of the one-dimensional cases presented in Algorithms \ref{alg:LS} and \ref{alg}.

\subsection{Stable Recovery via Semidefinite Programming } \label{sec:sdp}
Algorithm \ref{alg} uses merely two columns of the matrix $\mathbf{Y}[m,\ell]$ (\ref{eq:dft}). Algorithm \ref{alg:sdp} suggests to exploit a larger set of columns of $\mathbf{Y}[m,\ell]$, to achieve robust estimation in the presence of noise.

Let $\Lambda\subseteq\{0,1,\dots,N-1\}$ and let $\vert\Lambda\vert$ be its cardinality. Algorithm \ref{alg:sdp} suggests to estimate the signal from noisy observations by minimizing the trace of a matrix among all positive semidefinite Hermitian matrices which satisfy the $\vert \Lambda\vert$  constraints on the matrix diagonals. It is well-known that minimizing the trace of an Hermitian matrix promotes a low-rank solution. 

We expect that choosing a larger set of observations $\vert \Lambda\vert$ will lead to a better estimation of $\mathbf{x}$. This is true if the matrices  $\mathbf{G}_\ell$ are invertible for all $\ell\in\Lambda$. Therefore, Algorithm \ref{alg:sdp} enables the flexibility  to choose the number of observation we use depending on the STFT window. For instance, if we consider the rectangular window $\mathbf{1}_{[0,W-1]}$ for some $W<N/2$ and the set $\Lambda=\{0,1,\dots,\lambda-1\}$, then the matrices $\mathbf{G}_\ell$ are invertible only if $\lambda\leq W$ and that $N$ and $W-\ell$ are coprime for all $\ell\in \Lambda$. If this is the situation, we expect that increasing $ \lambda $ (up to $W$) will result in a better estimation. The numerical results in Section \ref{sec:numerics} corroborate this conclusion.

 \begin{algorithm} 
\caption{Recovering a signal from its noisy STFT magnitude using SDP}
\begin{algorithmic}
\item \textbf{Input:}  The noisy magnitude of the signal STFT (\ref{eq:meas}). 
\item \textbf{Output:} Estimate of the sought signal $\hat{\mathbf{x}}$. 
\begin{enumerate}
\item Compute $\mathbf{Z}[m,\ell]$ according to (\ref{eq:dft}).
\item Choose a set $\Lambda\subseteq\{0,\dots,N-1\}$  and solve : 
\begin{equation} \label{eq:sdp}
\begin{split}
&{\mathbf{X}}=\min_{\widetilde{\mathbf{X}}\in\mathcal{H}_N} \mbox{trace}(\widetilde{\mathbf{X}}) \quad \mbox{subject to } \quad\widetilde{\mathbf{X}}\geq 0, \\ & \Vert \mathbf{z}_\ell-\mathbf{G}_\ell \mbox{diag}(\widetilde{\mathbf{X}},\ell)\Vert_2\leq \eta_\ell, \quad \forall\ell\in\Lambda,
\end{split}
\end{equation}
where $\eta_\ell $ is the noise level associated with the $\ell^{th}$ column of $\mathbf{Z}[m,\ell]$.
\item $\hat{\mathbf{x}}=\sqrt{\lambda_{\max}}u_{\max}$, where $\lambda_{\max}$ is the maximal eigenvalue of ${\mathbf{X}}$ and $u_{\max}$ is the associated eigenvector.
\end{enumerate}
\end{algorithmic}
 \label{alg:sdp}

\end{algorithm}


\section{Numerical Experiments} \label{sec:numerics}

This section is devoted to numerical experiments, examining the three algorithms presented above. In all experiments, both the real and the imaginary parts of the signals were drawn from an iid normal distribution. An additive noise with normal distribution was added with the appropriate variance, according to the desired signal to noise (SNR) ratio. The error was measured by $\frac{\Vert \mathbf{x}-\hat{\mathbf{x}}\Vert_2 }{\Vert \hat{\mathbf{x}}\Vert_2}$, where $\hat{\mathbf{x}}$ is the algorithm output.

We  compared Algorithm \ref{alg:LS} with the GLA \cite{griffin1984signal} and the noisy version of the SDP algorithm in \cite{jaganathan2015stft}. We solved the SDP using CVX \cite{grant2008cvx} and stopped the iterative GLA if the difference between consecutive iterations was less that $10^{-6}$ or after 500 iterations. Figure \ref{fig:LS} demonstrates that all three algorithms have similar performance for signals of length $N=23$ and Gaussian window $e^{-n^2/\sigma^2}$ with $\sigma=\left \lceil \frac{N}{2}\right \rceil$. However, the average running time of the GLA and the SDP of  \cite{jaganathan2015stft} was 27  and 2634 times greater than the running time of Algorithm \ref{alg:LS}, respectively. 
Table \ref{table} presents the recovery error of Algorithm \ref{alg} in a noise-free environment. We used a rectangular window with different lengths. As predicted by Theorem \ref{th:main_algebriac}, the recovery error is negligible.
Figure \ref{fig:sdp} compares the recovery error of Algorithm \ref{alg:sdp} with the SDP algorithm of \cite{jaganathan2015stft} for signals of length $N=23$ and a rectangular window with $W=5$ as a function of the SNR. The problems were solved by CVX. We chose $\Lambda=\{0,1,\dots,\lambda-1\}$ for $\lambda=3,5,23$ . 
As can be seen, Algorithm \ref{alg:sdp} is stable even for small values of $\lambda$. Additionally, increasing $ \lambda $ beyond $W$ does not reduce the recovery error significantly as predicted in Section \ref{sec:sdp}. The running time of  the SDP algorithm of \cite{jaganathan2015stft} was 3.5 and 12 times greater than the running time of Algorithm \ref{alg:sdp} with $\lambda=5,23$, respectively.  

\begin{figure} [h]

       \includegraphics[scale=0.4]{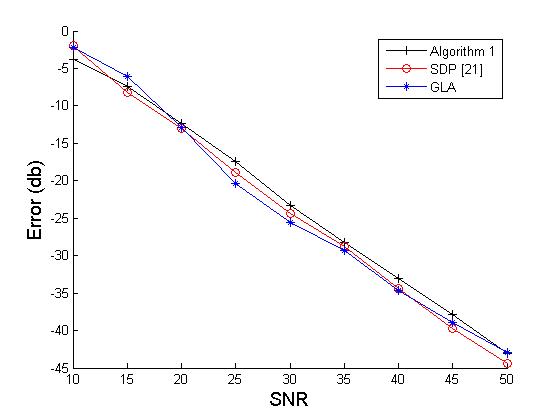}

\protect\caption{\label{fig:LS} The recovery error of Algorithm \ref{alg:LS}, the SDP method suggested in \cite{jaganathan2015stft} and the GLA \cite{griffin1984signal} as a function of the SNR.  Each experiment was conducted 10 times. }
\end{figure}

\begin{figure} [h]
\centering
       \includegraphics[scale=0.4]{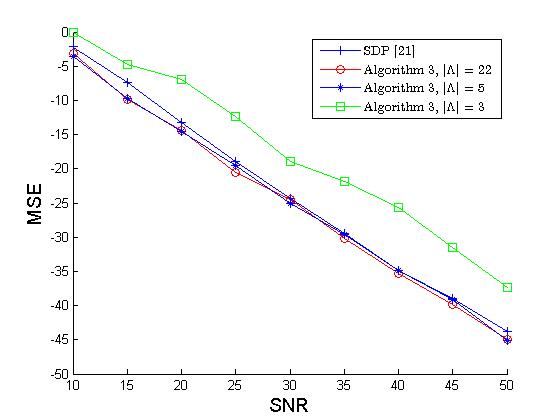}
\protect\caption{\label{fig:sdp} The recovery error of Algorithm \ref{alg:sdp} with a rectangular window of length $W=9$ and $N=23$ as a function of the SNR for different values of $\vert \Lambda\vert$. The experiment was conducted 10 times for each value.}
\end{figure}

\begin{table}[h]
\begin{center}
    \begin{tabular}{| l | l | l | l |} 
    \hline
     & $W=5$ & $W=23$ & $W=41$ \\ \hline
    Mean error & $3.52\times 10^{-12}$  & $6.84\times 10^{-12}$  & $1.13\times 10^{-11}$ \\ \hline
    Max error &  $1.46\times 10^{-12}$  & $3.05\times 10^{-12}$  & $7.02\times 10^{-11}$ \\ \hline    
    \end{tabular} 
\end{center}
    \caption{The recovery error of Algorithm \ref{alg} with a rectangular window of length $W$ and $N=211$ in a noise-free environment.  The experiment was conducted 100 times for each  window size. }      
    \label{table}
\end{table}

\section{Conclusion} \label{sec:conclusion}
In this work, we  suggested three interrelated algorithms  to recover a signal from its STFT magnitude. Each algorithm suits a different case. Algorithm \ref{alg:LS} is  efficient and robust to noise but restricts the potential STFT windows. Algorithm \ref{alg} is an efficient and exact algorithm in the noise-free environment. Algorithm \ref{alg:sdp} uses an SDP formulation and contains the flexibility to control its stability to noise at the cost of restrictions on the STFT windows. Proving its robustness and properties is the goal of future work.

All the results in this paper were focused on maximal overlapping in the STFT windows, i.e. $L=1$. Previous papers have indicated that this restriction can be alleviated. The analysis of the case $L>1$ requires incorporating new tools into the suggested framework and is under ongoing research.

 
\bibliographystyle{IEEEbib}
\bibliography{refs}

\end{document}